\newtheorem{corollary}{Corollary}[section]
\newtheorem{theorem}{Theorem}
\begin{document}
%
\title{Boolean Expressions in Firewall Analysis}
%
%
%

\author{Adam~Hamilton,~\IEEEmembership{University of Adelaide,}
        Matthew~Roughan,~\IEEEmembership{University of Adelaide,}
        and~Giang~Nguyen,~\IEEEmembership{University of Adelaide}
\thanks{}
\thanks{Adam Hamlton, Giang Nguyen and Matthew Roughan are associated with The Faculty of Sciences Engineering and technology, University of Adelaide, South Australia, Australia email: adam.h.hamilton@adelaide.edu.au}
\thanks{Manuscript received April 19, 2005; revised August 26, 2015.}}

%
%

\markboth{IEEE Transactions on Network and Service Management}%
{Shell \MakeLowercase{\textit{et al.}}: Bare Demo of IEEEtran.cls for IEEE Journals}
%



\maketitle

\begin{abstract}
Firewall policies are an important line of defence in cybersecurity, specifying which packets are allowed to pass through a network and which are not. These firewall policies are made up of a list of interacting rules. In practice, firewall can consist of hundreds or thousands of rules. This can be very difficult for a human to correctly configure \cite{dineshathesisalma9927943178101811, wool1306389}. One proposed solution is to model firewall policies as Boolean expressions and use existing computer programs such as SAT solvers to verify that the firewall satisfies certain conditions. This paper takes an in-depth look at the Boolean expressions that represent firewall policies. We present an algorithm that translates a list of firewall rules into a Boolean expression in conjunctive normal form (CNF) or disjunctive normal form (DNF). We also place an upper bound on the size of the CNF and DNF that is polynomial in the number of rules in the firewall policy. This shows that past results suggesting a combinatorial explosion when converting from a Boolean expression in CNF to one in DNF does note occur in the context of firewall analysis.
\end{abstract}

\begin{IEEEkeywords}
Firewalls, Boolean Algebra, Conjunctive Normal Form, Computational Complexity, Computing Policy, Access Control Lists.
\end{IEEEkeywords}

%
\IEEEpeerreviewmaketitle

\section{Introduction}
%
%
%
%
\IEEEPARstart{F}{irewalls} form one of the first lines of defence in cybersecurity. They help to protect the network by implementing a security policy, a set of rules that tells the firewall which packets are allowed to pass through the firewall and which packets cannot. 

The main motivation behind this research is that firewalls are complicated \cite{wool1306389}. When there are a sufficiently large number of interconnected rules, it becomes difficult for a human to keep track of the policy enforced by the firewall \cite{wool1306389, dineshathesisalma9927943178101811}. Errors in firewall configurations are common in practice, and  can lead to large security holes that compromise entire networks \cite{wool1306389}. To make things easier for a human operator, it is important to be able to reason about the firewall's security policy in a computational and mathematical manner.

A useful tool in the analysis of firewall policies is Boolean algebra. In a large number of scenarios, the firewall's security policy can be viewed as a Boolean expression. Each rule in the security policy consists of an action and a predicate of a particular type. The firewall works by taking a packet and comparing the packet's data against these rules. 

Previous work \cite{Heule2016AnalysisOC, HazelhurstS2000Afit, HazelhurstScott2000AfAF, QBF} has shown this link between Boolean expressions and firewall policies, but relatively little has gone into the specific structure of the Boolean expression. We hope this paper can bridge this gap in exploring the Boolean structure of firewall policies. 

The main contributions of our paper are:

\begin{enumerate}
    \item We provide an efficient encoding of stateless firewall policies as Boolean expressions with bounds on the size of the resultant Boolean expressions. 
    \item We give an algorithm for translating in polynomial time, firewall policies from Boolean expressions in CNF to Boolean expressions in DNF and back. 
    \item We prove that the combinatorial explosion when translating between Boolean expressions in CNF to DNF does not occur in firewall analysis as previously suggested in \cite{latticesalma9926252301811}.
\end{enumerate}

\section{Background and related work}
\subsection{Firewall analysis}
A firewall is a method of filtering traffic on a computer network. A firewall sits either at the boundary of, or within a computer network and either allows packets to pass through the firewall or drops the packet from the network. The firewall policies that we consider consist of an ordered list of rules; each is made up of $d$ intervals followed by an action which is either $accept$ or $reject$. For instance, a firewall rule could say that if the destination port is between 20 and 40, then allow the packet to pass through the network. A packet is compared to a rule in the firewall by comparing the  number in the packet's header to the corresponding interval specified by the rule. If the numbers in the packet's header belong to each of the corresponding intervals, then the rule's action is applied to the packet. 

Here we use the term \textit{firewall policy} to denote a Boolean expression that represents the set of packets that are allowed through a firewall. A \textit{firewall rule list} is a method of expressing a firewall using an ordered list of rules.

Most firewall rules focus on just five packet header fields: protocol number, source and destination IP addresses, and source and destination port numbers \cite{LiuAlexX2010Fdaa}. There are other types of firewalls that employ more complex methods of packet filtering such as stateful policies, or deep packet inspection, but these are outside the scope of the paper. We do allow for filtering on more than just the five-tuple described but the important detail is that there can only be a bounded number of such fields without significantly altering the TCP/IP protocol. 

\subsection{Boolean algebra}

When we restrict firewalls to only allow or deny incoming and outgoing packets, a firewall policy is equivalent to a Boolean expression specifying which packets will be permitted through the firewall. This idea is useful because it allows us to express problems in firewall analysis as Boolean satisfiability (SAT) problems, and to use tools such as SAT solvers to extract information about the policy \cite{Heule2016AnalysisOC, HazelhurstS2000Afit, HazelhurstScott2000AfAF, QBF}. In this section, we define some common terminology used in Boolean algebra. 

A Boolean expression consists of a set of variables $x_i$ which can take a value of 0 or 1. A Boolean expression $EXP$ consists of either a variable, the negation of an expression $\neg EXP$, the disjunction of two expressions $EXP_1 \vee EXP_2$, or the conjunction of two expressions $EXP_1 \wedge EXP_2$. For instance, the following is a Boolean expression: $(x_1\vee \neg x_2) \vee \neg(x_3 \wedge \neg x_1).$

A \textit{normal form} is a method of specifying a mathematical object that has a particular syntactic form. A Boolean expression is in conjunctive normal form (CNF) if it is the conjunction of one or more clauses: $C_1\wedge C_2\wedge \dots \wedge C_m$, where each clause $C_i$ is of the form $v_1\vee v_2 \vee \dots \vee v_{n_i}$ and each $v_i$ is either a variable or the negation of a variable. Similarly, a Boolean expression is in disjunctive normal form (DNF) if it is the disjunction of numerous clauses: $C_1\vee C_2\vee \dots \vee C_m$, where each clause $C_i$ is of the form: $v_1\wedge v_2 \wedge \dots \wedge v_{n_i}$ and each $v_i$ is either a variable or the negation of a variable. 
The restrictive syntax of normal forms allows one to make statements about the computational complexity of certain problems. For instance, the problem of determining if a Boolean expression can evaluate to 1 is called Boolean satisfiability (SAT). If a Boolean expression is in CNF and every clause consists of at least 3 variables then solving SAT is an NP-complete \cite{COOK10.1145/800157.805047} problem. However, solving SAT on a Boolean expression in DNF can be solved in linear time in the size of the DNF expression \cite{latticesalma9926252301811}.

\subsection{Boolean algebra in firewall analysis}
Hazelhurst \cite{HazelhurstScott2000AfAF, HazelhurstS2000Afit} uses Boolean models to make firewall analysis simpler. The author argues that by expressing firewall policies as Boolean decision diagrams, managing firewalls becomes easier for a human to understand. Elmallah and Gouda \cite{AnalysisComplexity} use Boolean algebra to place lower bounds on the complexity of certain problems in firewall analysis. The authors define twelve decision problems in firewall analysis, and show how they can be expressed as Boolean SAT problems. 

Like Boolean expressions, firewall policies can also be expressed in normal form. In this paper we consider whitelists and blacklists. These are normal forms of firewalls that are very similar to CNF and DNF of Boolean expressions. A whitelist is a firewall policy that consists entirely of rules of the form $<predicate> \rightarrow allow$ followed by a default rule that denies all other packets. Likewise, a blacklist is a firewall policy whose rules are of the form $<predicate> \rightarrow deny$ followed by a default rule that accepts all remaining packets. Whitelists and blacklists have been studied by Ranathunga \textit{et al.} \cite{ranathunga2016mathematical}. There are advantages to expressing firewalls as blacklists and whitelists. One notable advantage is that rules commute when expressed in this way. This in turn has advantages when it comes to reasoning about firewall configurations. When one wants to add an extra rule to a firewall in whitelist form, it is not necessary to work out where in the rule list it needs to placed. The same also applies for firewalls expressed as blacklists.

In \cite{ranathunga2016mathematical} the authors also discuss the advantages of whitelists from a security perspective. In a safety critical computer system, such as a power plant, where malicious behaviour can lead to large amounts of damage and loss of life it is far better to deny a trusted user than allow an untrustworthy user access. Ranathunga \textit{et al.} as well as several published best-practice guides to firewall configuration \cite{dineshathesisalma9927943178101811} argue that in a whitelist it is harder for one to accidentally allow untrustworthy packets past the firewall, because one is forced to think about the packets that are allowed through. on the other hand, when one uses deny rules, it can be easier to forget to specify some undesirable packets. 

Firewall policies expressed as a whitelist are similar to Boolean expressions in DNF, in the sense that a packet is able to pass through the firewall if its header is contained in the disjunction of the firewalls rules; that is, if the packet satisfies at least one of the rules. Likewise, an input satisfies a Boolean expression in DNF if it satisfies at least one of the clauses. Similarly, a blacklist firewall policy is similar to a Boolean expression in CNF in that a packet is allowed through the firewall it does not satisfy any of the reject rules. More formally, a packet is allowed through a firewall blacklist if it is contained in the conjunction of the negation of each rule. We use this similarity later in the paper to define a method of translating firewall policies into Boolean expressions in normal forms.  

\subsection{Assumptions in this paper}
In their paper \cite{AnalysisComplexity}, Elmallah and Gouda define 12 problems in firewall analysis and show that they are NP-complete. In doing so, they show that a Boolean 3-SAT problem can be translated into a firewall policy in such a way that there is a one-to-one correspondence between rules in the firewall and clauses in the 3-SAT expression. This mapping from 3-SAT to firewalls can be done in polynomial time, proving the NP-completeness of Elmallah and Gouda’s 12 problems in firewall analysis. However, one of the key ingredients in their proof is introducing an arbitrary number of packet header fields. We do not believe this to be a realistic depiction of a computer network. The TCP/IP protocol suite consists of packets that have a finite number of header fields \cite{NetworkDetails}. In most works on firewall management there are only 5 packet header fields that are considered. Even in more complex situations we can assume an upper bound of 10 to 12 fields. 

In order to analyse computational complexity, it is important to specify which variables are being held constant and which are allowed to increase. In this paper, we keep the number of packet header fields constant and allow the number of possible field values such as addresses to increase. This number of possible addresses is called $B$. It is assumed that $B$ is large but $\log(B)$ is small or at least manageable. For instance, in IPV6, $B$ would be $2^{128}$. 

Heule et al. \cite{HeuleEtAl} reinforce the idea  that the number of fields in a packet header can almost always be assumed to be constant. In \cite{HeuleEtAl}, the authors describe a polynomial-time algorithm to answer the firewall implication problem in \cite{AnalysisComplexity}. The firewall implication problem is ``given two firewalls $F_1$ and $F_2$, is it the case that all the packets accepted by $F_1$ are also accepted by $F_2$?" We extend this work to provide a framework for designing polynomial time algorithms for all the problems in \cite{AnalysisComplexity}.

Another assumption we make is that every predicate in each rule of the firewall is made up of queries of the form $X>Y$ or $X = Y$, that is, we treat each packet header field as a number and we focus only on queries about the size of the number. Our complexity results would not hold if the firewall was capable of querying individual bits of a packet header.

\subsection{Firewall decision trees}
\label{Firewall decision trees}
Firewall decision trees are a useful way of representing a firewall policy. We use this model of firewall policies as an intermediate step between firewall policies given as rule lists, and firewall policies given as Boolean expressions. The reason for using firewall decision trees as an intermediate model is so we can use results from \cite{Acharya2010FirewallMA, RulesInPlay}, which place bounds on the size of the firewall decision tree and use these results to place bounds on the size of the Boolean expression. 
A firewall decision tree is a weakly-connected acyclic digraph used to represent firewall policies. A node in the digraph with no outgoing edges is called \textit{terminal}. Every terminal node in the decision tree is assigned an action, in our case, this action is $accept$ or $deny$. Every non-terminal node is marked with a name of a field and has outgoing edges marked with an interval of values. The values on different outgoing links from the same non-terminal node do not overlap. There is a single node called the \textit{root node} such that all edges connected to this node are outgoing edges. Given any packet $p$, there is exactly one path from the root that finishes at a terminal node. This terminal node defines the decision made by the firewall for the packet $p$. For an example of a firewall decision tree, see Figure \ref{completetree}.

We define the \textit{depth} of a node $N$ in a tree as the number of edges in the path from the root node to $N$. Because the decision tree is a tree, this path is unique. For simplicity, we assume that all nodes of a given depth are marked with the same field name and that all terminal nodes are located at the same depth. This convention is not strictly necessary; we could omit certain non-terminal nodes. This would correspond to a firewall in which some rules do not depend on certain packet fields. We could also permute some non-terminal nodes within the tree. This would correspond to a rule where you check the packet fields in a different order. We ignore these for simplicity. 

A decision tree is a simple deterministic finite automaton. To evaluate a packet against a firewall decision tree, we follow a path from the root down to the leaf node choosing at each node, the edge whose label corresponds to the packet's field value. We can do this by performing a binary search at each node, looking for the interval that corresponds to our given packet. Since the intervals in the outgoing edges of each node are disjoint, this path from root node to terminal node is unique. If we order the outgoing edges of each node, then we can match a packet against a firewall in $O(\log(n)d)$ time, where $n$ is the number of rules and $d$ is the number of fields in a packet header. 

\section{Algorithm for generating firewall decision trees.}

In this section, we present an algorithm for translating a firewall rule list into a firewall decision tree. Based on this algorithm, we can place bounds on the size of the tree and later place bounds on the efficiency of CNF and DNF translation. 

Let $I$ and $J$ be disjoint intervals on the natural numbers such that the smallest element of $J$ is larger than the largest element of $I$. We define the interval $b(I,J)$ as the set of all natural numbers between $I$ and $J$, $b(0,J)$ the interval of natural numbers smaller than any element of $J$, and $b(I,\infty)$, the set of natural numbers larger than any element of $I$.

We use $T$ to denote a firewall decision tree. Let $I_i$ be the intervals in the outgoing edges from the origin of $T$, and $T_i$ be the subtrees whose root is the end of $I_i$. We abuse notation and use $I_i$ to denote both the directed edge and its label.

We now recursively define the $addrule$ algorithm. It is a recursive algorithm that takes a rule $r$ and an existing policy, finds which parts of the rule are already covered by the policy, and then splits $r$ into sub-rules that are not covered by the policy. A rule is defined as a list $(r_1, r_2, \dots, r_n, A)$, where each $r_i$ denotes an interval over the natural numbers and $A$ is an action, either accept or deny. We use $r(2, :)$ to denote rule $r$ with the first entry removed, similar to the command ``cdr" in the LISP programming language. A rule can consist of no intervals and a single action $A$. We begin with the rule $r$ and tree $T$ as inputs. 

\begin{figure}
\centering
    \textbf{The Addrule algorithm}\par\medskip
\begin{algorithmic}
\STATE addrule(r, T)
\IF {$r$ and $T$ both consist of an action} 
  \STATE $addrule(r,T) = T$
\ELSE
  \STATE let $I_r$ be the first interval in the rule $r$.
  \FOR {each interval $I_i$ in the outgoing edges of the root node of $T$}
    \IF{$I_i \cap I_r$ is nonempty}
      \STATE split $I_i$ into two (possibly empty) edges $I_i \cap I_r$ and $I_i \setminus I_r$. A the end of $I_i \setminus I_r$ we place the subtree $T_i$, and at the end of $I_i \cap I_r$, we place the subtree $addrule(r(2,:), T_i)$ (the rule $r$ could still refer to packets not in the domain of the tree).
    \ELSE
      \STATE keep $I_i$ and its subtree $T_i$ in $T$ 
    \ENDIF
  \ENDFOR
  \STATE consider the intervals $b(I_i, I_{i+1})$, (the intervals between the outgoing edges of the origin of $T$)
  \FOR {each interval in [$b(I_i, I_{i+1})$]}
    \IF{$b(I_i, I_{i+1})\cap I_r$ is nonempty}
      \STATE create an edge from the origin of $T$, with the label $b(I_i, I_{i+1})\cap I_r$ and the subtree $r(2,:)$ at the end of it. 
    \ENDIF
  \ENDFOR
\ENDIF 
\end{algorithmic}
\caption{An algorithm that takes a firewall decision tree and a rule and returns a new decision tree.}

\end{figure}

We use vectors and trees interchangeably when we talk about the tree $r(2,:)$. We represent a vector $v = ( v_1, v_2,\dots, v_m )$ as a tree $T_v$, by defining a tree with $m+1$ nodes $n_0, n_1, \dots , n_m$. We put directed edges between the nodes $n_i, n_{i+1}$ for $i = 0,\dots n - 1$, and the edge between $n_{i-1}, n_i$ is marked with $v_i$. Essentially, when we express a vector as a tree, we draw a sequence of connected lines each marked with the vectors entries. 

In order to give the reader a more intuitive understanding of the notation used in this paper, we have provided a diagram, Figure \ref{fig:intervals}, to demonstrate this notation. 

\begin{figure}[h]
    \centering
    \includegraphics[width=0.5\textwidth]{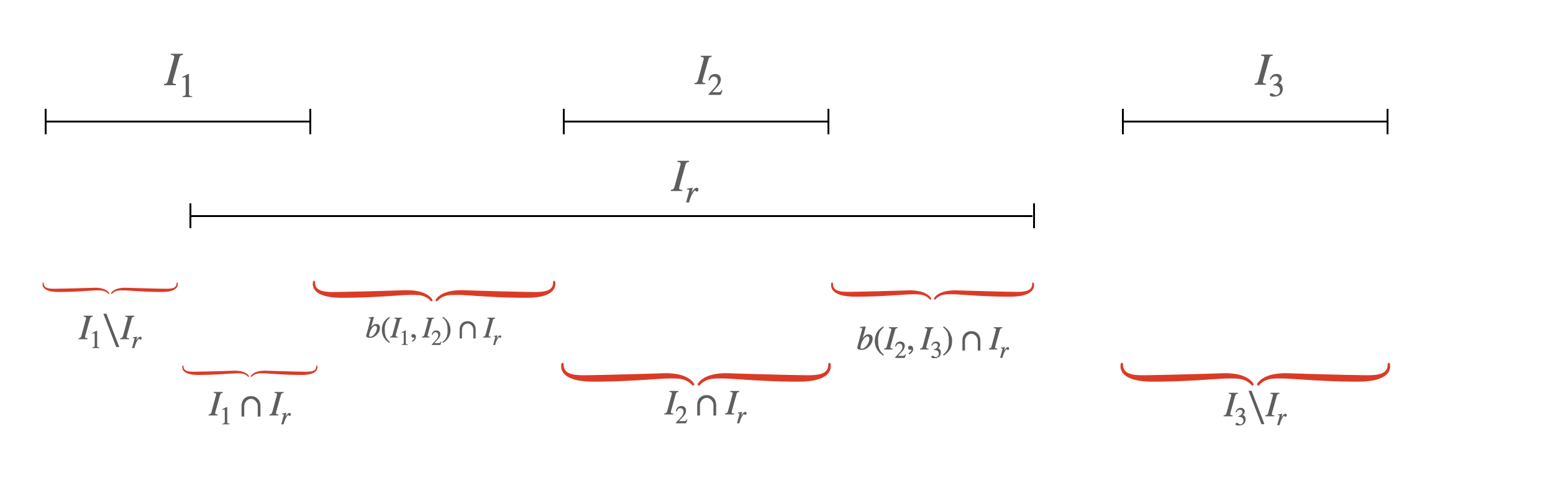}
    \caption{The intervals that we consider in the first iteration of $addrule(r,T)$. The intervals $I_r$, represents the first interval of rule $r$, and $I_1$, $I_2$, and $I_3$ are the intervals from the root node of $T$. We use the red brackets to denote the intervals that we need to consider in the algorithm. We have made a point of only showing the non-empty intervals. For instance, since $I_3$ is completely disjoint from $I_r$, we do not consider the interval $I_3\cap I_r$, likewise, we do not need to consider the empty interval $I_3\setminus I_r$, but we need to consider both $I_1\cap I_r$ and $I_1\setminus I_r$, since $I_1$ and $I_r$ overlap.}
    \label{fig:intervals}
\end{figure}

The diagrams in Figures \ref{fig:tree1} and \ref{fig:intervals2}, show one iteration of the $addrule$ algorithm, showing the new branches created in the firewall decision tree, as well as their labels and new subtrees. 

\begin{figure}[h]
    \centering
    \includegraphics[width=0.35\textwidth]{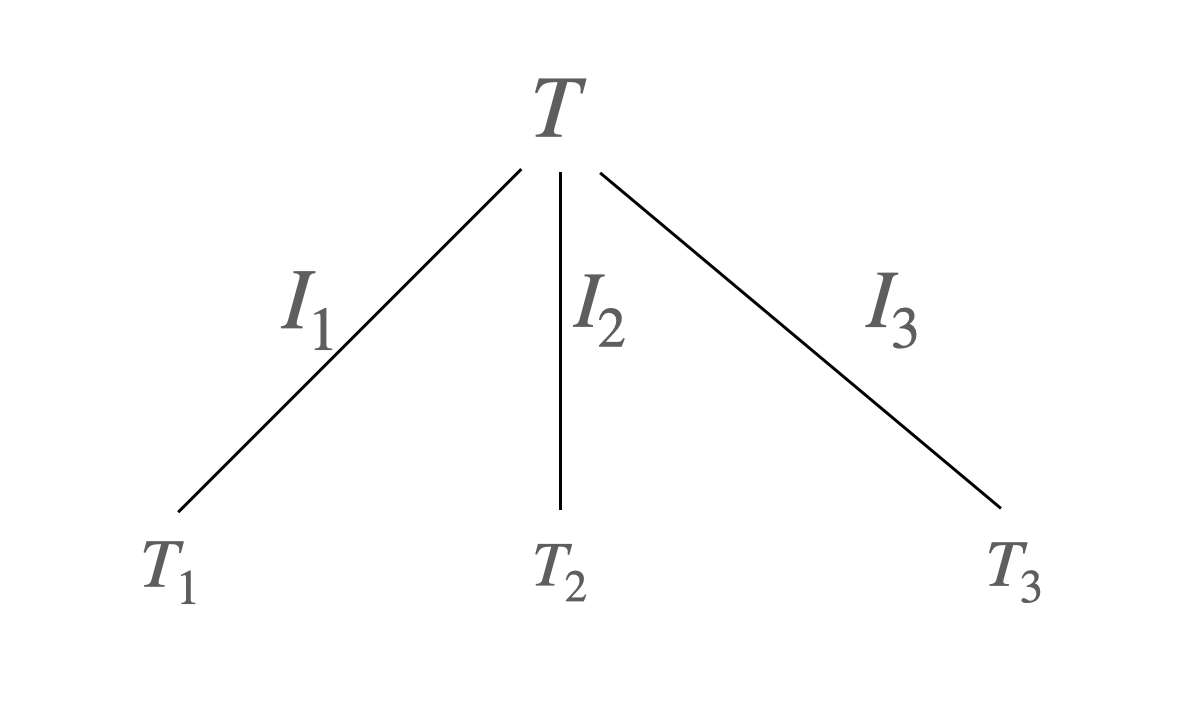}
    \caption{An abstract example of a firewall policy tree $T$ corresponding to the rules in Figure \ref{fig:intervals} to which we will add a rule $r$. The directed edges from the root node of $T$ are labelled with intervals $I_1$, $I_2$, and $I_3$, where $I_1<I_2<I_3$. The nodes at the ends of $I_1$, $I_2$, and $I_3$ are labelled with $T_1$, $T_2$, and $T_3$, where $T_i$, represents both the node at the end of $I_i$ and the subtree who's root node is $T_i$.}
    \label{fig:tree1}
\end{figure}

\begin{figure}[h]
    \centering
    \includegraphics[width=0.5\textwidth]{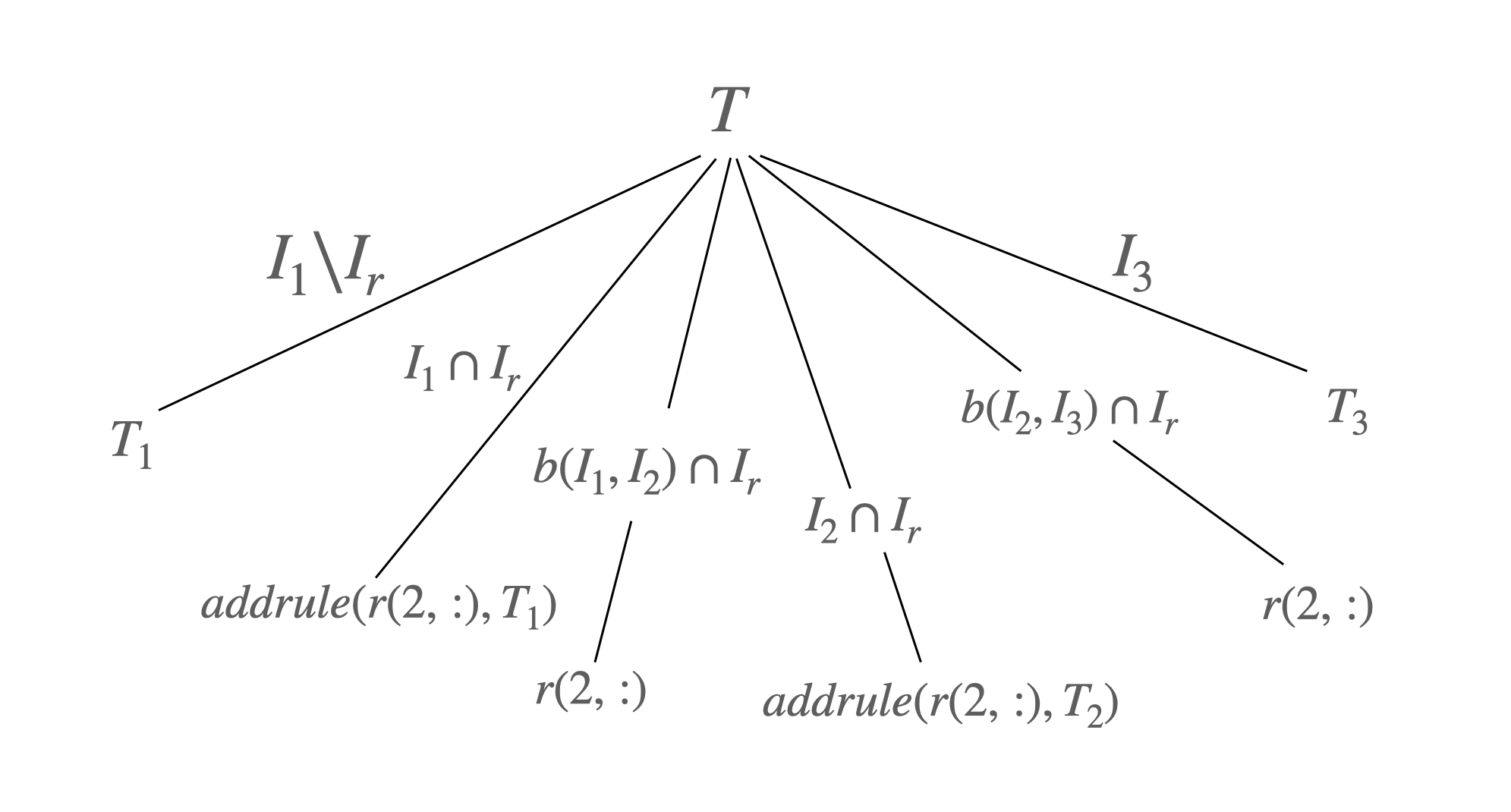}
    \caption{The first layer of the firewall decision tree shown in Figures \ref{fig:intervals}-\ref{fig:tree1} after adding $r$ from Figure \ref{fig:intervals}. The three edges from the root node are replaced by the five intervals from Figure \ref{fig:intervals}. The subtrees at the end of each edge is $T_i$ if the interval contains part of $I_i$ and no part of $I_r$ are given by $T_i$. The subtrees at the end of an interval that contains both $I_r$ and $I_i$ are given by $addrule(r(2,:), T_i)$, and the subtrees at the end of an interval that just consists of $I_r$ are given by $r(2,:)$}
    \label{fig:intervals2}
\end{figure}


\subsubsection{Concrete example of the addrule algorithm}

Here we show an example of the algorithm being used to construct the decision tree for the following rules in interval notation:
$$
(1, 10)(2, 5)(1, 10) \rightarrow \text{accept}
$$
$$
(3, 15)(3, 4)(1, 10) \rightarrow \text{deny}.
$$

\begin{figure}[h!]

\begin{center} 
\begin{tikzpicture}[scale=.7,colorstyle/.style={circle, draw=black!100,fill=black!100, thick, inner sep=0pt, minimum size=2 mm}]
    \node at (-5,-1)[colorstyle,label=left:$\text{origin}$]{};
    \node at (-5,-3)[colorstyle]{};
    \node at (-5,-5)[colorstyle]{};
    \node at (-5,-7)[colorstyle]{};
    \draw [thick](-5, -1) -- (-5, -3) node [midway, right, fill=white] {(1, 10)};
    \draw [thick](-5, -3) -- (-5, -5) node [midway, right, fill=white] {(2, 5)};
    \draw [thick](-5, -5) -- (-5, -7) node [midway, right, fill=white] {(1, 10)};
    
\end{tikzpicture}
\caption{The decision tree created from a single rule. This takes the list of intervals present in rule $r = (1, 10)(2, 5)(1, 10) \rightarrow \text{accept}$ and creates a tree with 4 nodes, and the edges are labelled with the intervals in the rule $r$. }
\label{fig:step1}
\end{center}

\end{figure}
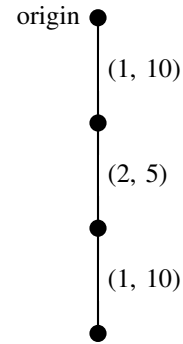
We begin by creating a decision tree from a single rule. This is shown in Figure \ref{fig:step1}.

We then take the second rule $(3, 15)(3, 4)(1, 10) \rightarrow \text{deny}$ and merge it into the decision tree. This is shown in Figure \ref{fig:step2}.

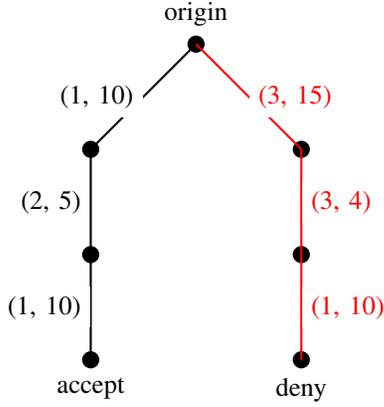
\begin{figure}[h!]
\begin{center} 
\begin{tikzpicture}[scale=.7,colorstyle/.style={circle, draw=black!100,fill=black!100, thick, inner sep=0pt, minimum size=2 mm}]
    \node at (-5,-1)[colorstyle,label=above:$\text{origin}$]{};
    \node at (-7,-3)[colorstyle]{};
    \node at (-7,-5)[colorstyle]{};
    \node at (-7,-7)[colorstyle,label=below:$\text{accept}$]{};
    \node at (-3,-3)[colorstyle]{};
    \node at (-3,-5)[colorstyle]{};
    \node at (-3,-7)[colorstyle,label=below:$\text{deny}$]{};
    
    \draw [thick](-5, -1) -- (-7, -3) node [midway, left, fill=white] {(1, 10)};
    \draw [thick](-7, -3) -- (-7, -5) node [midway, left, fill=white] {(2, 5)};
    \draw [thick](-7, -5) -- (-7, -7) node [midway, left, fill=white] {(1, 10)};
    \draw [thick, red](-5, -1) -- (-3, -3) node [midway, right, fill=white] {(3, 15)};
    \draw [thick, red](-3, -3) -- (-3, -5) node [midway, right, fill=white] {(3, 4)};
    \draw [thick, red](-3, -5) -- (-3, -7) node [midway, right, fill=white] {(1, 10)};
\end{tikzpicture}
\caption{The intermediate decision tree created from two rules connected at the origin. Observe that the outgoing edges from the origin are marked with overlapping intervals. This will be fixed in later iterations of the $addrule$ algorithm. The red edges and labels correspond to packets that rule $r$ refers to but the tree $T$ does not. }
\label{fig:step2}

\end{center}
\end{figure}

Since the intervals $(1,10)$, $(3,15)$ intersect in $(3,10)$, we split the intervals up into the three intervals $(1,2)$, $(3,10)$, and $(11,15)$. As rule 1 is the only rule that corresponds to the interval $(1,2)$, we append $r(2,:)$ to the end of the edge marked $(1,2)$. This is shown in Figure \ref{fig:step3}.

\begin{figure}[h!]

\begin{center} 
\begin{tikzpicture}[scale=.7,colorstyle/.style={circle, draw=black!100,fill=black!100, thick, inner sep=0pt, minimum size=2 mm}]
    \node at (-5,-1)[colorstyle,label=above:$\text{origin}$]{};
    \node at (-8,-3)[colorstyle]{};
    \node at (-8,-5)[colorstyle]{};
    \node at (-8,-7)[colorstyle,label=below:$\text{accept}$]{};
    \node at (-2,-3)[colorstyle]{};
    \node at (-2,-5)[colorstyle]{};
    \node at (-2,-7)[colorstyle,label=below:$\text{deny}$]{};
    \node at (-5,-3)[colorstyle, label=right:$T_1$]{};
    \node at (-6,-5)[colorstyle]{};
    \node at (-6,-7)[colorstyle,label=below:$\text{accept}$]{};
    \node at (-4,-5)[colorstyle]{};
    \node at (-4,-7)[colorstyle,label=below:$\text{deny}$]{};
    \draw [thick](-5, -1) -- (-8, -3) node [midway, left, fill=white] {(1, 2)};
    \draw [thick](-8, -3) -- (-8, -5) node [midway, left, fill=white] {(2, 5)};
    \draw [thick](-8, -5) -- (-8, -7) node [midway, left, fill=white] {(1, 10)};
    \draw [thick](-5, -1) -- (-5, -3) node [midway, fill=white] {(3, 10)};
    \draw [thick, red](-5, -3) -- (-4, -5) node [midway, right, fill=white] {(3, 4)};
    \draw [thick, red](-4, -5) -- (-4, -7) node [midway, right, fill=white] {(1, 10)};
    \draw [thick](-5, -3) -- (-6, -5) node [midway, left, fill=white] {(2, 5)};
    \draw [thick](-6, -5) -- (-6, -7) node [midway, left, fill=white] {(1, 10)};
    \draw [thick, red](-5, -1) -- (-2, -3) node [midway, right, fill=white] {(11, 15)};
    \draw [thick, red](-2, -3) -- (-2, -5) node [midway, right, fill=white] {(3, 4)};
    \draw [thick, red](-2, -5) -- (-2, -7) node [midway, right, fill=white] {(1, 10)};
\end{tikzpicture}
\caption{The first recursion: We add the parts of $I_r$ that do not intersect $I_1$ as a separate branch. We now have the subtree $T_1$ which will be resolved in the next iteration of the $addrule$ algorithm.}
\end{center}
\end{figure}
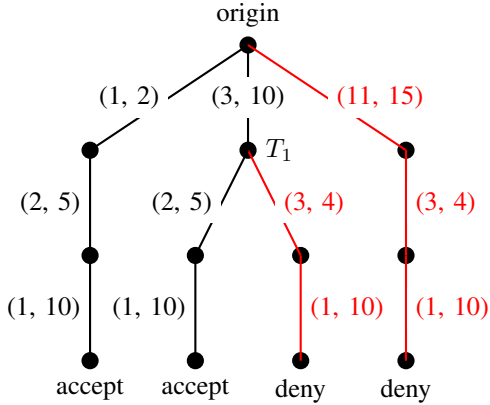

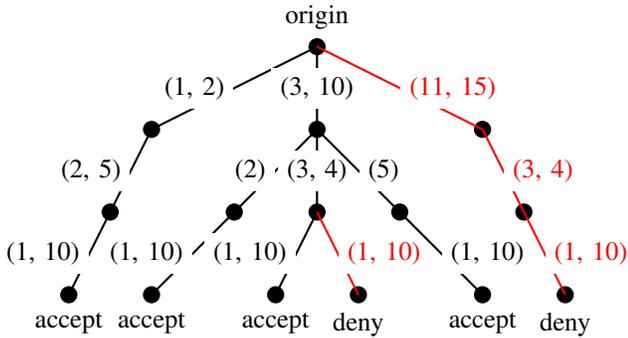
\begin{figure}[h!]

\begin{center} 
\begin{tikzpicture}[scale=.55,colorstyle/.style={circle, draw=black!100,fill=black!100, thick, inner sep=0pt, minimum size=2 mm}]
    \node at (-5,-1)[colorstyle,label=above:$\text{origin}$]{};
    \node at (-9,-3)[colorstyle]{};
    \node at (-5,-3)[colorstyle]{};
    \node at (-1,-3)[colorstyle]{};
    \node at (-10,-5)[colorstyle]{};
    \node at (-7,-5)[colorstyle]{};
    \node at (-5,-5)[colorstyle]{};
    \node at (-3,-5)[colorstyle]{};
    \node at (0,-5)[colorstyle]{};
    \node at (-11,-7)[colorstyle,label=below:$\text{accept}$]{};
    \node at (-9,-7)[colorstyle,label=below:$\text{accept}$]{};
    \node at (-6,-7)[colorstyle,label=below:$\text{accept}$]{};
    \node at (-4,-7)[colorstyle,label=below:$\text{deny}$]{};
    \node at (-1,-7)[colorstyle,label=below:$\text{accept}$]{};
    \node at (1,-7)[colorstyle,label=below:$\text{deny}$]{};

    \draw [thick](-5, -1) -- (-9, -3) node [midway, left, fill=white] {(1, 2)};
    \draw [thick](-9, -3) -- (-10, -5) node [midway, left, fill=white] {(2, 5)};
    \draw [thick](-10, -5) -- (-11, -7) node [midway, left, fill=white] {(1, 10)};
    
    \draw [thick](-5, -1) -- (-5, -3) node [midway, fill=white] {(3, 10)};
    \draw [thick](-5, -3) -- (-7, -5) node [midway, left, fill=white] {(2)};
    \draw [thick](-7, -5) -- (-9, -7) node [midway, left, fill=white] {(1, 10)};
    
    \draw [thick](-5, -3) -- (-5, -5) node [midway, fill=white] {(3, 4)};
    \draw [thick](-5, -5) -- (-6, -7) node [midway, left, fill=white] {(1, 10)};
    
    \draw [thick, red](-5, -5) -- (-4, -7) node [midway, right, fill=white] {(1, 10)};
    
    \draw [thick](-5, -3) -- (-3, -5) node [midway, right, fill=white] {(5)};
    \draw [thick](-3, -5) -- (-1, -7) node [midway, right, fill=white] {(1, 10)};
    
    \draw [thick, red](-5, -1) -- (-1, -3) node [midway, right, fill=white] {(11, 15)};
    \draw [thick, red](-1, -3) -- (0, -5) node [midway, right, fill=white] {(3, 4)};
    \draw [thick, red](0, -5) -- (1, -7) node [midway, right, fill=white] {(1, 10)};
    
\end{tikzpicture}
\caption{The second iteration of the $addrule$ algorithm. All the edges of depth two are marked with disjoint intervals.}
\label{fig:step4}

\end{center}
\end{figure}

Now that we have completed the decision tree from the origin, we need to merge rule $r_2(2,:)$, into $T_1$, the subtree whose root node is the endpoint of the edge labelled $(3,10)$. So we take the first interval in rule $r_2(2,:)$, (3,4), and the intervals in the outgoing edges of the root node of $T_1$, (2,5). These intervals intersect at $(3,4)$, and rule 2 is not present at $(2)$, and $(5)$. This is shown in Figure \ref{fig:step4}.

So now, our final step is to merge the rule $(1,10)\rightarrow deny$, with the tree, that consists of a single edge $(1,10)\rightarrow accept$. Since the interval of the tree's outgoing edge, is a superset of the rule's interval, the rule is completely subsumed by the tree and we just have the tree $(1,10)\rightarrow accept$. 

We have now reached the base case. We have a tree that just consists of the action accept and a rule that just consists of the action deny. As specified in the algorithm, the accept action takes highest priority and so we have our decision tree. 

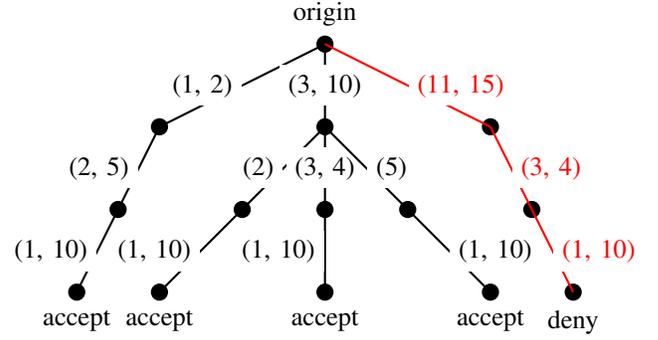
\begin{figure}[h!]
\begin{center} 
\begin{tikzpicture}[scale=.55,colorstyle/.style={circle, draw=black!100,fill=black!100, thick, inner sep=0pt, minimum size=2 mm}]
    \node at (-5,-1)[colorstyle,label=above:$\text{origin}$]{};
    \node at (-9,-3)[colorstyle]{};
    \node at (-5,-3)[colorstyle]{};
    \node at (-1,-3)[colorstyle]{};
    \node at (-10,-5)[colorstyle]{};
    \node at (-7,-5)[colorstyle]{};
    \node at (-5,-5)[colorstyle]{};
    \node at (-3,-5)[colorstyle]{};
    \node at (0,-5)[colorstyle]{};
    \node at (-11,-7)[colorstyle,label=below:$\text{accept}$]{};
    \node at (-9,-7)[colorstyle,label=below:$\text{accept}$]{};
    \node at (-5,-7)[colorstyle,label=below:$\text{accept}$]{};
    \node at (-1,-7)[colorstyle,label=below:$\text{accept}$]{};
    \node at (1,-7)[colorstyle,label=below:$\text{deny}$]{};

    \draw [thick](-5, -1) -- (-9, -3) node [midway, left, fill=white] {(1, 2)};
    \draw [thick](-9, -3) -- (-10, -5) node [midway, left, fill=white] {(2, 5)};
    \draw [thick](-10, -5) -- (-11, -7) node [midway, left, fill=white] {(1, 10)};
    
    \draw [thick](-5, -1) -- (-5, -3) node [midway, fill=white] {(3, 10)};
    \draw [thick](-5, -3) -- (-7, -5) node [midway, left, fill=white] {(2)};
    \draw [thick](-7, -5) -- (-9, -7) node [midway, left, fill=white] {(1, 10)};
    
    \draw [thick](-5, -3) -- (-5, -5) node [midway, fill=white] {(3, 4)};
    \draw [thick](-5, -5) -- (-5, -7) node [midway, left, fill=white] {(1, 10)};
    
    \draw [thick](-5, -3) -- (-3, -5) node [midway, right, fill=white] {(5)};
    \draw [thick](-3, -5) -- (-1, -7) node [midway, right, fill=white] {(1, 10)};
    
    \draw [thick, red](-5, -1) -- (-1, -3) node [midway, right, fill=white] {(11, 15)};
    \draw [thick, red](-1, -3) -- (0, -5) node [midway, right, fill=white] {(3, 4)};
    \draw [thick, red](0, -5) -- (1, -7) node [midway, right, fill=white] {(1, 10)};
    
\end{tikzpicture}
\caption{The third and final iteration of the $addrule$ algorithm. We now have a firewall decision tree whose edge labels are disjoint. }
\label{merged tree}

\end{center}
\end{figure}

We now have a decision tree representation of the rules $ (1, 10)(2, 5)(1, 10) \rightarrow \text{accept}$ and $(3, 15)(3, 4)(1, 10) \rightarrow \text{deny}$. We can translate the decision tree back into a rule list by taking the paths from the origin to every leaf node. So the rules $ (1, 10)(2, 5)(1, 10) \rightarrow \text{accept}$ and $(3, 15)(3, 4)(1, 10) \rightarrow \text{deny}$ are equivalent to the list of rules

\begin{align*} 
(1,2)(2,5)(1,10)&\rightarrow \text{accept}\\
(3,10)(2)(1,10)&\rightarrow \text{accept}\\
(3,10)(3,4)(1,10)&\rightarrow \text{accept}\\
(3,10)(5)(1,10)&\rightarrow \text{accept}\\
(11,15)(3,4)(1,10)&\rightarrow \text{deny}
\end{align*}

Since the predicates of each rule refers to disjoint sets, the order of the rules is irrelevant. 

Using this algorithm to translate a firewall into a decision tree, the tree can grow quite quickly. We now introduce the final step of this algorithm which groups adjacent rules together to create a reduced tree. We say that two nodes $n_1$ and $n_2$ of a firewall decision tree are \textit{adjacent} if and only if the following three conditions hold:
\begin{enumerate}
    \item $n_1$ and $n_2$ are sibling nodes \textit{i.e.}, they have the same parent node.
    \item Let $e_1$ and $e_2$ denote the edges that connect $n_1$ and $n_2$ to their mutual parent node. $e_1$ and $e_2$ are marked with two intervals $(a,b)$ and $(c, d)$, where $c = b + 1$.
    \item The sub-trees, with $n_1$ and $n_2$ as their respective roots are identical. 
\end{enumerate}

Intuitively, we can imagine each rule in the firewall as being a rectangular hyperprism. In our examples, $d = 3$ and we can imagine each rule as a rectangular prism. When two rules are adjacent we can represent them both using a single rule. This is the same as putting the two rectangular prisms together to form a single rectangular prism. The three conditions is a mathematically rigorous way of stating this.

Two adjacent nodes can be grouped together as follows
\begin{itemize}
    \item Starting at the terminal nodes, group together all adjacent nodes. 
    \item For every node at depth $i$ for $i\in {d, d-1, \dots, 2, 1}$, group together all adjacent nodes. 
\end{itemize}
By starting from the bottom of the tree and proceeding upwards, we ensure that the sub-trees below two adjacent nodes are actually identical and not just equivalent. In this algorithm, each node of the tree is visited exactly once, and thus the algorithm runs in linear time with respect to the number of nodes. 

Consider the decision tree in Figure \ref{merged tree}. After grouping adjacent nodes together, we obtain the tree. 

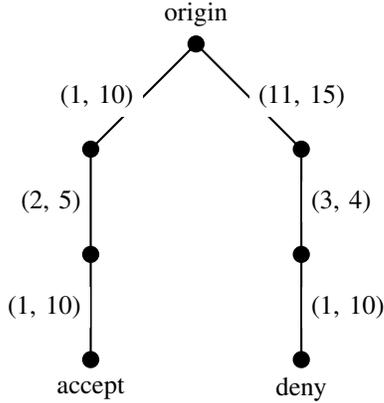
\begin{figure}[h!]
\begin{center} 
\begin{tikzpicture}[scale=.7,colorstyle/.style={circle, draw=black!100,fill=black!100, thick, inner sep=0pt, minimum size=2 mm}]
    \node at (-5,-1)[colorstyle,label=above:$\text{origin}$]{};
    \node at (-7,-3)[colorstyle]{};
    \node at (-3,-3)[colorstyle]{};
    \node at (-7,-5)[colorstyle]{};
    
    \node at (-3,-5)[colorstyle]{};
    \node at (-7,-7)[colorstyle,label=below:$\text{accept}$]{};

    \node at (-3,-7)[colorstyle,label=below:$\text{deny}$]{};
    \draw [thick](-5, -1) -- (-7, -3) node [midway, left, fill=white] {(1, 10)};
    \draw [thick](-7, -3) -- (-7, -5) node [midway, left, fill=white] {(2, 5)};
    \draw [thick](-7, -5) -- (-7, -7) node [midway, left, fill=white] {(1, 10)};
    
    \draw [thick](-5, -1) -- (-3, -3) node [midway, right, fill=white] {(11, 15)};
    \draw [thick](-3, -3) -- (-3, -5) node [midway, right, fill=white] {(3, 4)};
    \draw [thick](-3, -5) -- (-3, -7) node [midway, right, fill=white] {(1, 10)};
    
\end{tikzpicture}
\caption{The decision trees with adjacent rules grouped together. }
\label{grouped tree}

\end{center}
\end{figure}

\subsection{Whitelist and blacklist translation}
It is a simple matter to extract the whitelist and blacklist translations of a firewall policy from a decision tree. In this subsection we present an algorithm that does this. A firewall decision tree is called \textit{complete} when every packet is mapped to an action. We can make each firewall decision tree complete by merging it with a \textit{deny all} or an \textit{accept all} rule. We continue to use the example in Figure \ref{grouped tree}.

We merge this tree with a deny all rule to obtain the following decision tree. For simplicity, we assume that each packet header field takes on a number from 0 to 15. 

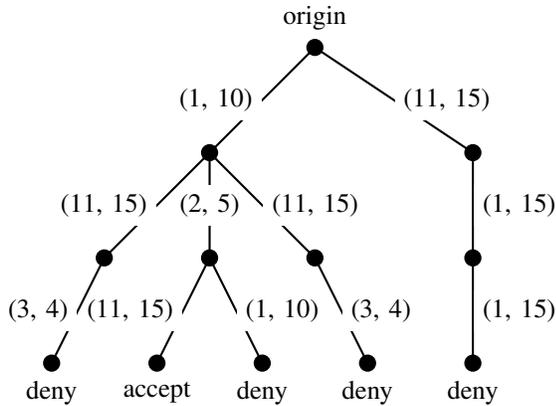
\begin{figure}[h!]
\begin{center} 
\begin{tikzpicture}[scale=.7,colorstyle/.style={circle, draw=black!100,fill=black!100, thick, inner sep=0pt, minimum size=2 mm}]
    \node at (-5,-1)[colorstyle,label=above:$\text{origin}$]{};
    \node at (-7,-3)[colorstyle]{};
    \node at (-2,-3)[colorstyle]{};
    \node at (-7,-5)[colorstyle]{};
    \node at (-9, -5)[colorstyle]{};
    \node at (-5, -5)[colorstyle]{};
    \node at (-2,-5)[colorstyle]{};
    
    \node at (-10,-7)[colorstyle,label=below:$\text{deny}$]{};
    \node at (-8,-7)[colorstyle,label=below:$\text{accept}$]{};
    \node at (-6,-7)[colorstyle,label=below:$\text{deny}$]{};
    \node at (-4,-7)[colorstyle,label=below:$\text{deny}$]{};
    \node at (-2,-7)[colorstyle,label=below:$\text{deny}$]{};
    
    \draw [thick](-5, -1) -- (-2, -3) node [midway, right, fill=white] {(11, 15)};
    \draw [thick](-2, -3) -- (-2, -5) node [midway, right, fill=white] {(1, 15)};
    \draw [thick](-2, -5) -- (-2, -7) node [midway, right, fill=white] {(1, 15)};
    \draw [thick](-5, -1) -- (-7, -3) node [midway, left, fill=white] {(1, 10)};
    \draw [thick](-7, -3) -- (-7, -5) node [midway, fill=white] {(2, 5)};
    \draw [thick](-7, -5) -- (-6, -7) node [midway, right, fill=white] {(1, 10)};
    \draw [thick](-7, -5) -- (-8, -7) node [midway, left, fill=white] {(11, 15)};
    \draw [thick](-7, -3) -- (-5, -5) node [midway, right, fill=white] {(11, 15)};
    \draw [thick](-5, -5) -- (-4, -7) node [midway, right, fill=white] {(3, 4)};
    \draw [thick](-7, -3) -- (-9, -5) node [midway, left, fill=white] {(11, 15)};
    \draw [thick](-9, -5) -- (-10, -7) node [midway, left, fill=white] {(3, 4)};
    
\end{tikzpicture}
\caption{A complete decision tree given by taking the decision tree in Figure \ref{grouped tree} and adding a deny rule using the addrule algorithm. Now every possible packet in this space is assigned a unique path from the tree's origin to a leaf node.}
\label{completetree}
\end{center}
\end{figure}

We can translate the firewall policy into a whitelist by going through the tree and taking a list of all the paths from the origin to the decision nodes marked ``accept". Because the paths in the decision tree represent disjoint sets of packages, it does not matter what order we take these paths. The whitelist corresponding to Figure \ref{completetree} is given by:

$$
(1, 10)(2, 5)(1, 10) \rightarrow \text{accept}
$$
$$
(1, 15)(1, 15)(1, 15) \rightarrow \text{deny},
$$
and the corresponding blacklist is given by 
$$
(1, 10)(1)(1, 15) \rightarrow \text{deny}
$$
$$
(1, 10)(2,5)(11, 15) \rightarrow \text{deny}
$$
$$
(1, 10)(6, 15)(1, 15) \rightarrow \text{deny}
$$
$$
(11, 15)(1, 15)(1, 15) \rightarrow \text{deny}
$$
$$
(1, 15)(1, 15)(1, 15) \rightarrow \text{accept}.
$$

In summary, we have shown how to create a Boolean decision tree from a rule list using the addrule algorithm. We've seen how it takes a rule and recursively finds the components of the rule disjoint form the tree and adds these components to the tree. 

\section{Extracting Boolean expressions from firewall decision trees}

In this section, we provide an algorithm that extracts Boolean expressions from firewall decision trees. We then show that based on this algorithm, the size of the Boolean expressions are within a polynomial factor of each other. 

Every firewall rule list consists of rules of the form \newline$<predicate>\rightarrow <action>$, We will first show how each predicate can be expressed as a Boolean formula and how, these formulas can be combined together to give a Boolean expression that describes the firewall policy. Consider the following rule:
$$
x_1 \in [0,108] \wedge x_2 \in [21,655] \wedge x_3 \in [7,616] \rightarrow accept.
$$ 
Each rule consists of $n$ statements of the form $x_i \in [a,b]$, where $a$ and $b$ are (not necessarily distinct) integers. We show how to derive an expression for $x_i\in [a,b]$ by using an interval tree. We then use the interval tree to augment the firewall decision tree to obtain a Boolean decision diagram. Using some simple combinatorial arguments, we show that the size of the firewall policy, represented as a Boolean expression in CNF/DNF can be bounded above by the size of the firewall policy expressed as a blacklist/whitelist multiplied by a constant factor. 

The key technique used is that numbers can be represented as bit vectors. For example, an address segment is a number between 0 and 255. At a lower level, the address segment is a vector of 8 bits. The 8-bit number $x$ can be represented by the bit vector $(x_7, . . . , x_0)$, where each $x_i$ is a Boolean variable. Using $f$ to represent 0 and using $t$ to represent 1, the condition that the bit vector $x$ is equal to 5, is $x_7, . . . , x_0 = f,f,f,f,f,t, f, t$. This yields the Boolean expression $\neg x_7\wedge\neg x_6\wedge\neg x_5\wedge\neg x_4\wedge\neg x_3\wedge x_2\wedge \neg x_1\wedge x_0$. 

\subsection{Segment trees}
In the previous subsection, we showed how to represent numbers as bit vectors and how we can create a Boolean expression that checks equality. In this subsection, we show how to efficiently represent intervals using segment trees. The simplest method to represent intervals is to represent all the numbers in the interval as bit vectors and take their disjunction. For an interval $[a,b]$, this results in a Boolean expression that has $b-a$ clauses. 

We can reduce this to $O(4\log(B))$ clauses, where $B$ is the size of the address space, by expressing our interval using a segment tree \cite{berg1997computational}. A segment tree is a data structure designed to represent interval data. A segment for a set of intervals taking values in the range $(0, B)$ uses $O(B \log B)$ storage and can be built in $O(B \log B)$ time \cite{berg1997computational}. There are certain optimisations that can make the tree smaller by placing restrictions of the types of packets that can be encountered. For the sake of simplicity, we do not consider these optimisations. 

In Figure \ref{fig:intervaltree} we show a depth-4 segment tree. This tree can represent intervals between 0 and 15. 

\begin{figure}[h]
    \centering
    \includegraphics[width=0.5\textwidth]{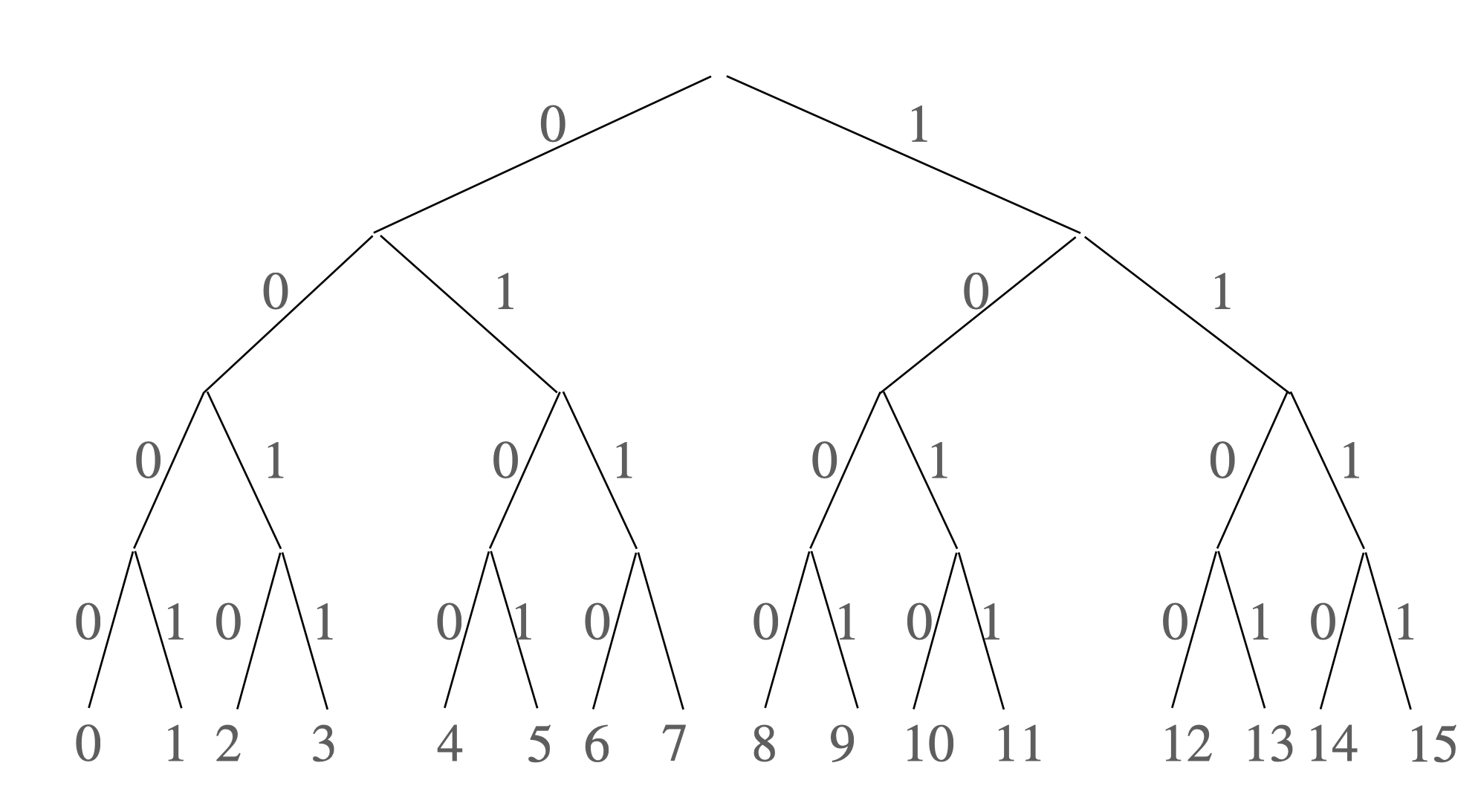}
    \caption{An interval tree of depth 4. The edge labels represent the Boolean variables used to express the numbers on the leaf nodes. }
    \label{fig:intervaltree}
\end{figure}

This is a binary tree. We assign a Boolean variable to each layer of the tree. The origin is assigned $x_0$, the second layer is assigned $x_1$, and so on. Every node in the terminal layer, can be written in binary notation, such that the binary notation of the number at the terminal node also represents the instructions for how to reach that number in the tree. For instance the number 4 can be written in binary as 0011 or be given as the set of directions: ``go left, left, right, and right down the tree". We can now identify all the nodes in the segment tree by a binary string and call this the \textit{sequence of a node}.  

Now that we have a decision tree, we give an example of expressing the interval $[3,13]$ using the tree and extracting a Boolean expression that is only satisfied by a bit vector with value between 3 and 13 inclusive. We do this by first identifying all the paths in the tree that terminate at one of these numbers. Then we alter the decision tree according to the following rules:

\begin{figure}[h]
    \centering
    \includegraphics[width=0.5\textwidth]{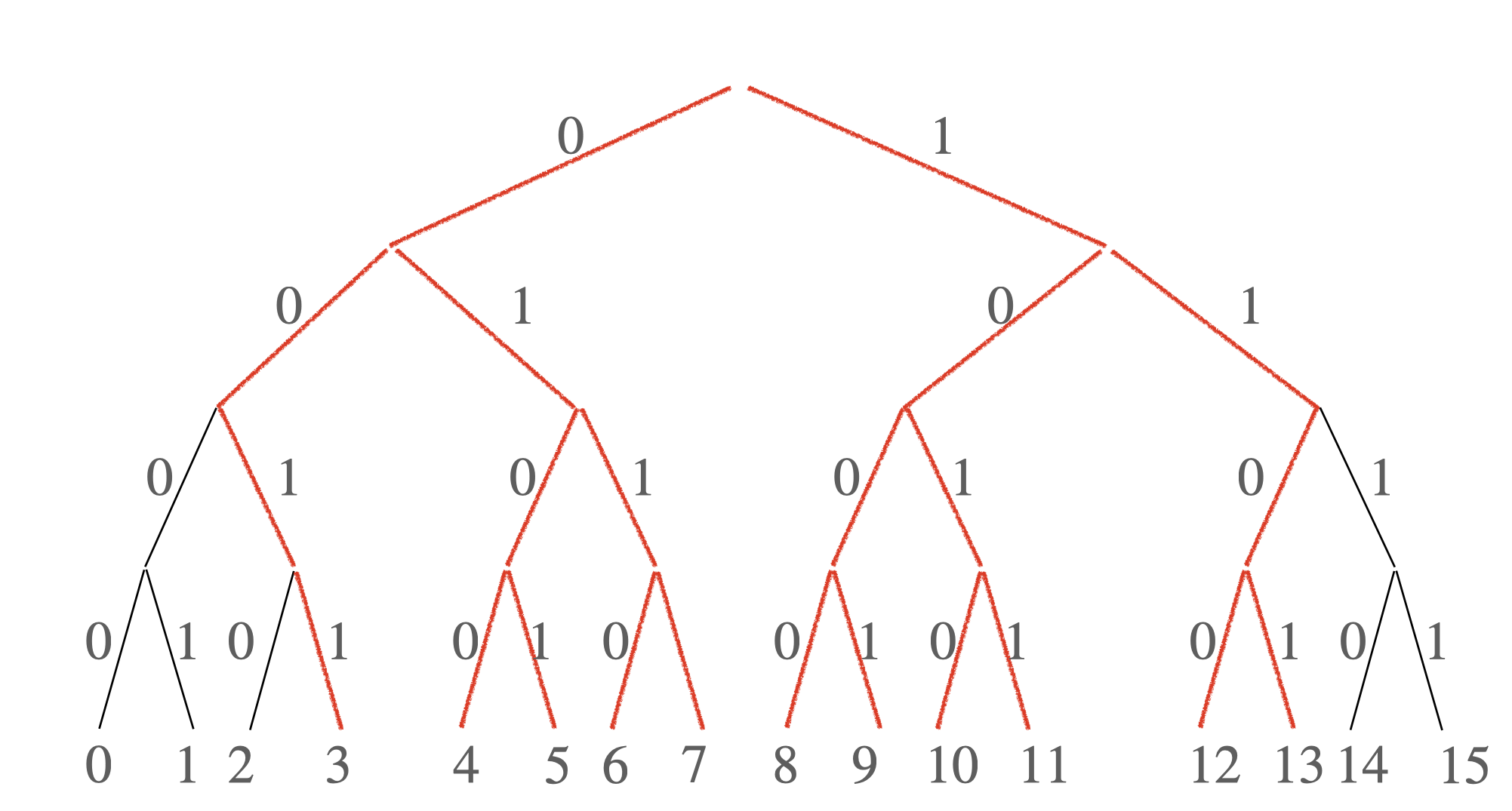}
    \caption{The interval tree from Figure \ref{fig:intervaltree}, with the paths from the root to nodes whose labels are between 3 and 13 shown in red. }
    \label{fig:intervaltree2}
\end{figure}

\begin{enumerate}
    \item If a node $n$ and it's neighbour $n'$ are marked and both have no marked descendants, then remove the mark from both $n$ and $n'$ and mark the parent node of $n$ and $n'$. 
    \item Terminate this process when it is impossible to remove any more marked nodes. 
\end{enumerate}

After applying these rules to the decision tree, we call the result a \textit{reduced decision tree}. We use this terminology because the Boolean expression associated with the interval tree is reduced. We apply these rules to Figure \ref{fig:intervaltree2}, we obtain the reduced, interval tree.

\begin{figure}[h]
    \centering
    \includegraphics[width=0.5\textwidth]{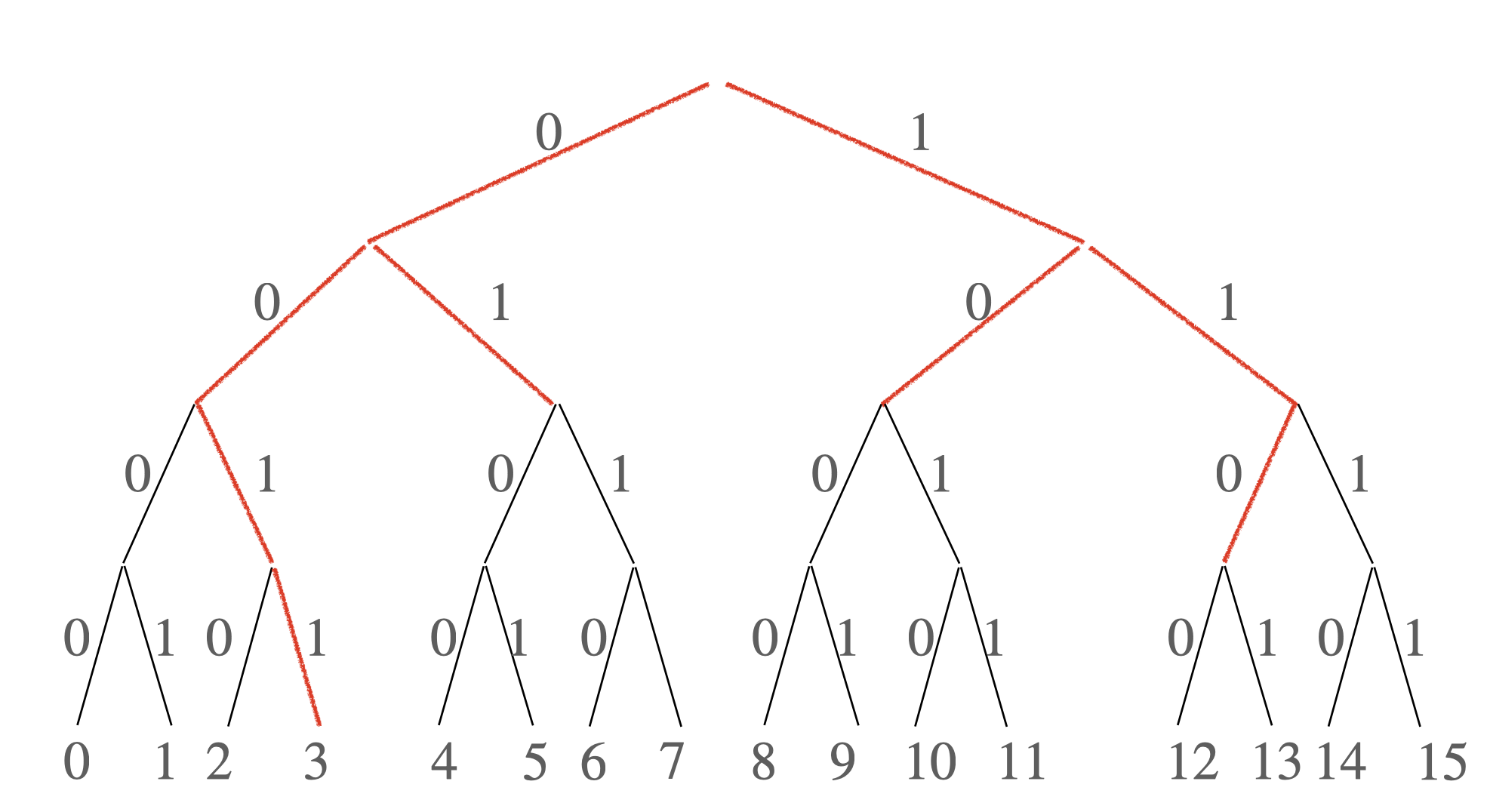}
    \caption{The interval tree from figure \ref{fig:intervaltree2} with the redundant edges removed.  }
    \label{fig:intervaltree3}
\end{figure}

Now that we have the reduced interval tree, we can obtain a Boolean expression for the interval $[3,13]$, by taking the disjunction of every marked path from the root node. From our reduced interval tree in Figure \ref{fig:intervaltree3}, we obtain the Boolean expression
$$
\neg x_0\neg x_1 x_2 x_3 \vee \neg x_0 x_1 \vee x_0 \neg x_1 \vee x_0 x_1 \neg x_2.
$$

\subsection{Boolean decision trees of firewall policies}
We now show how to combine the segment tree data structures and the firewall decision trees to create a Boolean decision tree representation of a firewall policy. We start by taking the firewall decision tree from Section \ref{Firewall decision trees} and represent all the outgoing edges in the firewall decision tree using a segment tree. Here we show an example involving the complete decision tree from Figure \ref{completetree}. For simplicity, we denote the subtrees rooted at the end of the $(1, 10)$ edge as $T_1$ and $T_2$, respectively.

The main idea is to take the intervals in each edge of the decision tree and to replace this edge with an interval tree. 

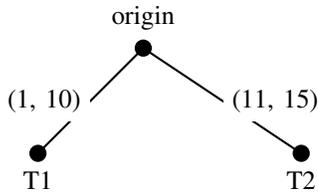
\begin{figure}[h!]
\begin{center} 
\begin{tikzpicture}[scale=.7,colorstyle/.style={circle, draw=black!100,fill=black!100, thick, inner sep=0pt, minimum size=2 mm}]
    \node at (-5,-1)[colorstyle,label=above:$\text{origin}$]{};
    \node at (-7,-3)[colorstyle,label=below:$\text{T1}$]{};
    \node at (-2,-3)[colorstyle,label=below:$\text{T2}$]{};

    \draw [thick](-5, -1) -- (-2, -3) node [midway, right, fill=white] {(11, 15)};
    \draw [thick](-5, -1) -- (-7, -3) node [midway, left, fill=white] {(1, 10)};
    
\end{tikzpicture}
\caption{The decision tree whose edges are marked with intervals}
\label{completetree}

\end{center}
\end{figure}

We can express the two intervals $(1, 10)$ and $(11, 16)$ using the segment tree in Figure \ref{completetree}. 

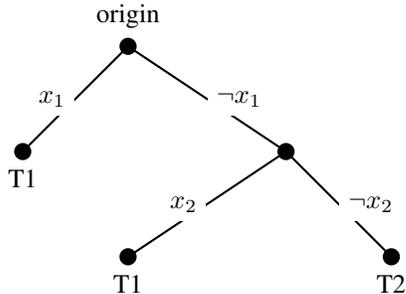
\begin{figure}[h!]
\begin{center} 
\begin{tikzpicture}[scale=.7,colorstyle/.style={circle, draw=black!100,fill=black!100, thick, inner sep=0pt, minimum size=2 mm}]
    \node at (-5,-1)[colorstyle,label=above:$\text{origin}$]{};
    \node at (-7,-3)[colorstyle,label=below:$\text{T1}$]{};
    \node at (-2,-3)[colorstyle]{};
    \node at (-5, -5)[colorstyle,label=below:$\text{T1}$]{};
    \node at (0, -5)[colorstyle,label=below:$\text{T2}$]{};

    \draw [thick](-5, -1) -- (-7, -3) node [midway, left, fill=white] {$x_1$};
    \draw [thick](-5, -1) -- (-2, -3) node [midway, right, fill=white] {$\neg x_1$};
    \draw [thick](-2, -3) -- (-5, -5) node [midway, left, fill=white] {$ x_2$};
    \draw [thick](-2, -3) -- (0, -5) node [midway, right, fill=white] {$\neg x_2$};

\end{tikzpicture}
\caption{The same decision tree whose vertices are marked with Boolean variables. }
\label{completetree}

\end{center}
\end{figure}

We can proceed down the tree and express the intervals as binary decision trees until the entire firewall policy is translated into a BDD. 

\subsection{CNF and DNF reformat}
\label{CNF and DNF reformat}
\begin{theorem}
Given a firewall decision tree expressed as a Boolean decision diagram, the firewall policy expressed in CNF and in DNF can be extracted from the diagram in time that is polynomial in the number of firewall rules $n$. 
\end{theorem}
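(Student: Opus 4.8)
The plan is to reduce the extraction of both normal forms to enumerating root-to-terminal paths in the Boolean decision diagram (BDD), and then to bound the number of such paths by a polynomial in $n$. Recall that in the BDD every edge is labelled by a single literal ($x_i$ or $\neg x_i$), so each directed path from the root to a terminal node is exactly a conjunction of literals. Since the firewall decision tree is complete, each packet reaches either an $accept$ or a $reject$ leaf, and the set of packets reaching the $accept$ leaves is precisely the firewall policy. The DNF is then read off directly: I would traverse the BDD, collect every path ending at an $accept$ leaf as a product term (the conjunction of its edge literals), and take the disjunction of these terms. This is by construction a disjunctive normal form.

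For the CNF I would use the dual reading via De Morgan. Each path ending at a $reject$ leaf is again a conjunction of literals $\ell_1 \wedge \cdots \wedge \ell_k$, and since the $reject$ region is the complement of the policy, the policy equals the negation of the disjunction of all reject product terms. Pushing the negation inward turns each reject path into a single clause $\neg\ell_1 \vee \cdots \vee \neg\ell_k$, and the policy becomes the conjunction of these clauses, i.e. a conjunctive normal form with one clause per reject path. No distribution blow-up occurs here, because the segment-tree structure of the BDD has already factored each interval test into single-literal edges; the distribution that would naively be needed when intersecting $d$ interval-DNFs is realised implicitly as the concatenation of segment-tree paths along a single root-to-leaf path.

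It remains to bound the number of paths, which is where the main work lies. I would argue in two stages. First, using the $addrule$ construction together with the tree-size bounds of \cite{Acharya2010FirewallMA, RulesInPlay}, the firewall decision tree has a number of leaves that is polynomial in $n$; the crucial point is that the depth $d$ (the number of header fields) is held constant, so the $O(n)$ distinct intervals that can appear at each of the $d$ levels yield at most $O(n^d)$ leaves rather than an exponential count. Second, replacing each interval edge by its reduced segment tree multiplies the number of paths through a given firewall leaf by the product, over the $d$ levels, of the $O(\log B)$ marked segment-tree paths per interval, i.e. by a factor $O((\log B)^d)$. Because $d$ is constant this factor is polynomial in $\log B$, so the total number of BDD paths, and hence the size of both the CNF and the DNF, is $O(n^d (\log B)^d)$, polynomial in $n$.

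The hard part will be the path-counting bound, and in particular making explicit that the constant-$d$ assumption is exactly what prevents the combinatorial explosion reported in \cite{latticesalma9926252301811}. In the Elmallah and Gouda reduction $d$ grows with the instance, and it is precisely the factors $n^d$ and $(\log B)^d$ that become super-polynomial in that regime; I would emphasise that holding $d$ fixed collapses both to polynomials, and that the segment-tree bound of $O(4\log B)$ clauses per interval is what keeps the per-level cost logarithmic rather than linear in $B$. The remaining steps, namely that each traversal and each De Morgan negation costs time proportional to the path length $O(d\log B)$, are routine, so the overall extraction runs in time polynomial in $n$.
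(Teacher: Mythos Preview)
Your proposal is correct and follows essentially the same approach as the paper: enumerate root-to-terminal paths in the BDD to obtain the DNF from the accept paths and the CNF via De~Morgan from the reject paths, then bound the total path count by the product of the firewall-decision-tree leaf bound $O(n^d)$ (the paper uses $(2n-1)^d$) and the per-level segment-tree factor $O((\log B)^d)$ (the paper uses $(2\log B)^d$). Your write-up is in fact more explicit than the paper's about why the constant-$d$ assumption is essential to avoid the blow-up.
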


\begin{proof}
The Boolean expression in DNF can be found by going through the Boolean decision diagram and counting the number of paths from the root node to the terminal node marked ``1". 

The number of variables in each clause is at most $\log(B)$, where $B$ is the largest possible interval that we can have for $[a,b]$. The number of clauses is at most $2\log(B)$. There are at most $d$ intervals in the rule. 

Here we provide a loose upper bound on the size of the number of paths from the root node to the terminal node marked ``1". In the firewall decision tree for a policy $P$, there are at most $(2n - 1)^d$ terminal nodes. This means that there are at most $(2n - 1)^d$ rules in the firewall expressed as a whitelist. We consider the interval tree for each interval used to mark the edges in the firewall decision tree. There are at most $(2\log(B))$, clauses in the Boolean expression that describe each interval. We can build a DNF representation of the firewall policy by going along every path from the root node to the decision node marked ``1" and for every edge crossed, adding the path from the interval tree to our clause and then taking the disjunction of all of the clauses. The number of clauses in a Boolean expression is at most $(2\log(B))^d(2n-1)^d$, which remains polynomial in the number of rules $n$. 
\end{proof}

We can obtain a CNF representation of the firewall policy $P$ by counting all the paths from the root node to the terminal node labelled with "0". This is the negation of $P$ expressed as a Boolean expression in DNF, denoted by $\neg P$. We then take the negation of $\neg P$ to obtain $\neg \neg P$, an expression equivalent to $P$. Since $\neg P$ is in DNF, we can obtain an expression for $P$ in CNF through a straightforward application of DeMorgan's law
$$
\neg \bigwedge_{i = 1}^{C}\big( \bigvee_{j = 1}^{K_i} t_i^j\big)
=\bigvee_{i = 1}^{C}\big( \bigwedge_{j = 1}^{K_i} \neg t_i^j\big).
$$
Since the number of nodes in the firewall decision tree is upper-bounded by $(2n-1)^d$, the number of rules in a firewall expressed as a blacklist is also $(2n-1)^d$ and therefor the number of clauses in the CNF of the Boolean expression is at most $(2\log(B))^d(2n-1)^d$.

\begin{corollary}
Let $P$ be a family of firewall policies, let $P_{rule}$ be the size of these policies expressed as a rule list, and let $P_{FO}$ be the size of the firewall policy expressed in first order logic. There exists some family of policies $P$ such that $P_{rule} \in O(2^{P_{FO}})$.
\end{corollary}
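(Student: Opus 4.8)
The plan is to prove the corollary by exhibiting an explicit family, parametrised by the bit-width $w=\log B$ of the address space, for which the first-order description stays linear in $w$ while any rule-list representation must contain at least $B=2^{w}$ rules. A convenient choice is the two-field equality policy $P_w$ on $x_1,x_2\in\{0,\dots,2^{w}-1\}$ (so $d=2$) that accepts a packet exactly when $x_1=x_2$. First I would bound $P_{FO}$ from above: at the bit level this predicate is the conjunction $\bigwedge_{k=0}^{w-1}(x_{1,k}\leftrightarrow x_{2,k})$, so $P_{FO}=\Theta(w)=\Theta(\log B)$. All of the real work then lies in the matching exponential lower bound $P_{rule}\ge B$ on the \emph{minimal} rule list.

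To lower-bound $P_{rule}$ I would reduce the two-dimensional problem to a one-dimensional \emph{interval painting} problem along a monotone staircase. Define the $2B-1$ points $u_{2i}=(i,i)$ and $u_{2i+1}=(i,i+1)$ for $i=0,1,\dots$, listed in this path order. Both coordinate sequences are non-decreasing along the path, and for the equality policy the action alternates between \emph{accept} and \emph{reject}, beginning and ending with \emph{accept}; hence the policy exhibits $2B-1$ monochromatic runs along the staircase. The two key observations are: (i) every rule's predicate is an axis-parallel box $[a_1,b_1]\times[a_2,b_2]$, and because each coordinate is monotone along the staircase, the set of path indices with $x_1\in[a_1,b_1]$ is contiguous, the set with $x_2\in[a_2,b_2]$ is contiguous, and their intersection is therefore a contiguous block; and (ii) under first-match semantics the action at a point is that of its highest-priority matching rule, so restricting the ordered rule list to the staircase is exactly a priority painting of a line by coloured intervals.

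The counting step finishes the argument. Painting the rules onto the path from lowest to highest priority, with the default rule as a single stroke covering the whole path (one run), each further interval stroke increases the number of monochromatic runs by at most two; hence $m$ rules produce at most $2m-1$ runs. Since the realised pattern has $2B-1$ runs, $2m-1\ge 2B-1$, giving $P_{rule}\ge m\ge B$. Combined with $P_{FO}=\Theta(\log B)$ this yields $P_{rule}=\Theta(B)=\Theta(2^{P_{FO}})$, so in particular $P_{rule}\in O(2^{P_{FO}})$, which proves the corollary and in fact shows the separation is tight.

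I expect the main obstacle to be exactly observations (i)--(ii): ruling out clever use of ordering together with \emph{reject} rules. A naive whitelist argument---that an accept box contained in the diagonal can only be a single point---fails the moment higher-priority reject rules are allowed to carve holes out of large accept boxes, so the monotone-staircase contiguity lemma is essential; it is what forces every box, regardless of its action or priority, to act as a single interval on the line and thereby licenses the clean run-counting bound. A secondary point to state carefully is the size convention: the exponential relationship holds precisely because $P_{FO}$ is measured at the $\log B$-bit level, so that $2^{P_{FO}}=\Theta(B)$ and the bound is neither vacuous nor loose.
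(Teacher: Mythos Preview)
Your argument is correct and takes a genuinely different route from the paper.

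The paper proceeds indirectly: it invokes Theorem~1, which says that for a policy given by $n$ rules over $d$ fields the induced DNF has at most $((2\log B)(2n-1))^{d}$ clauses, cites the classical fact that there exist Boolean formulas whose DNF is exponentially larger than their CNF, and concludes by contraposition that any rule list realising such a formula must itself be exponential in the CNF (i.e.\ first-order) size. No explicit family is exhibited; the existence of the blowup family is imported from the SAT literature.

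You instead construct the witness directly---the two-field equality policy---and prove the rule-list lower bound by an elementary combinatorial argument on a monotone staircase, entirely bypassing Theorem~1 and the CNF/DNF gap literature. The contiguity lemma (your observation (i)) is exactly what handles arbitrary mixing of accept and reject rules with priorities, which is the point the paper's contrapositive finesses. What your approach buys is (a) an explicit, fixed-$d$ family, fully consistent with the paper's standing assumption that the number of header fields is bounded; (b) a tight $\Theta(2^{P_{FO}})$ separation rather than merely exponential; and (c) a self-contained proof that does not depend on the size bound of Theorem~1. What the paper's approach buys is brevity once Theorem~1 is in hand, and a structural explanation of \emph{why} the blowup must appear: any CNF/DNF exponential gap forces a rule-list blowup through the polynomial link established earlier.

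One cosmetic remark: the corollary is stated with $O(\cdot)$ where $\Omega(\cdot)$ is clearly intended; you were right to read it that way and prove the nontrivial direction.
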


\begin{proof}
In section \ref{CNF and DNF reformat} saw that the DNF and CNF of firewall policies are within a degree $d$ polynomial factor of one another. If we were to consider packet spaces with $d+1$ fields then, we would obtain a family of Boolean expressions, whose CNF and DNF forms are within a degree $d+1$ polynomial of each other. It is well known that Boolean formulas in DNF can be exponentially larger than the expression in CNF \cite{Velev}. Let $P$ be a sequence of Boolean expressions that have this property. 

Since the Boolean expression in DNF is within a polynomial factor of the firewall policy as a rule list, we need an exponential number of rules in the rule list to create these policies. So the size of these policies expressed as rule-lists must be exponentially larger than the formula expressed in first-order logic in CNF. 
\end{proof}

\section{Conclusion}
In this paper we have made and have justified the following points. 
Under the assumption of finitely many packet header fields, Firewall policies can be expressed as Boolean expressions such that the expression in CNF and the expression in DNF have sizes that are both $d$ dimensional polynomials of the original number of rules. Firewall policies in CNF can be translated to their DNF counterparts in polynomial time, an exponential improvement over naive expansion. 

In the future we hope to extend these 

\section*{Acknowledgment}

The authors would like to thank Boeing Defence Australia, and the ARC Centre of Excellence for Mathematical and Statistical Frontiers.

\ifCLASSOPTIONcaptionsoff
  \newpage
\fi



%

\bibliographystyle{acm}
\bibliography{ref}

%

\begin{IEEEbiography}[{\includegraphics[width=1in,height=1.25in,clip,keepaspectratio]{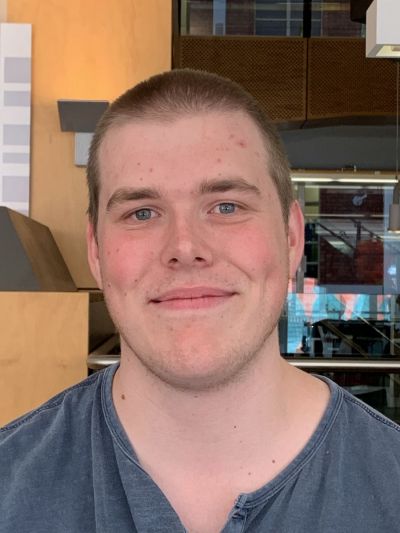}}]{Adam Hamilton}
obtained his bachelor's in mathematical science and his honours degree in mathematical and computer science from the University of Adelaide. He is currently pursuing a doctorate in mathematical and computer science at the University of Adelaide under the supervision of Professor Matthew Roughan and Dr Giang Nguyen. His research interests include the theory of computation, computational complexity, randomised algorithms, and time difference of arrival geolocation.  
\end{IEEEbiography}

\begin{IEEEbiography}[{\includegraphics[width=1in,height=1.25in,clip,keepaspectratio]{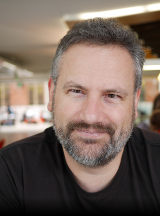}}]{Prof Matthew Roughan} obtained his PhD in Applied Mathematics from the
University of Adelaide in 1994. He has since worked in various
industry related positions with Australian Defence, with Ericsson at
the the University of Melbourne and at AT\&T Shannon Research Labs in
the United States. For the last decade or so he has worked in the School of
Mathematical Sciences at the University of Adelaide, in South
Australia. His research interests range from stochastic modelling to
measurement and management of networks like the Internet.  In 2018 he
was elected to be a Fellow of the ACM, and in 2019 a Fellow of the
IEEE for his work on Internet measurement, and he was a Chief
Investigator in the ARC Centre of Excellence for Mathematical and
Statistical Frontiers. 
\end{IEEEbiography}


\begin{IEEEbiography}[{\includegraphics[width=1in,height=1.25in,clip,keepaspectratio]{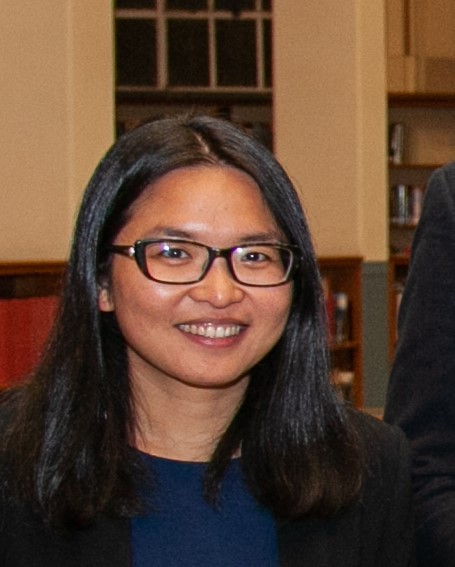}}]{Dr Giang Nguyen} Giang Nguyen is a Senior Lecturer in Applied Mathematics at the University of Adelaide. Her research interests include stochastic differential equations, regime-switching diffusions, matrix-analytic methods, branching processes, and the Hamiltonian cycle problem. She received a PhD from the University of South Australia in 2009, and completed her postdoctoral studies at the Universite libre de Bruxelles (2009-2012) and at the University of Adelaide (2012-2013). She was the inaugural Treasurer of the Australian Mathematical Society Women in Mathematics Special Interest Group (WIMSIG) (2013-2018), and the Director of Gender Equity, Diversity and Inclusion of the Faculty of Engineering, Computer and Mathematical Sciences, the University of Adelaide (2019-2020). She was a 2019 South Australia Tall Poppy Winner. 
\end{IEEEbiography}




\end{document}